\documentclass[12pt,letterpaper]{article}
\usepackage[T1]{fontenc}
\usepackage[utf8]{inputenc}
\usepackage[english]{babel}

\usepackage{microtype}
\usepackage{amsmath,amssymb,amsthm,mathtools}
\usepackage{newtxmath}
\usepackage{bm}
\usepackage{dsfont}
\usepackage{setspace}

\usepackage[
  letterpaper,
  top=0.95in,
  bottom=1.00in,
  left=1.25in,
  right=1.25in,
  footskip=0.42in
]{geometry}

\usepackage{graphicx}
\usepackage{booktabs}
\usepackage{threeparttable}
\usepackage{array}
\usepackage{siunitx}
\usepackage{caption}
\usepackage{subcaption}

\usepackage{enumitem}
\setlist{nosep,leftmargin=1.5em}
\usepackage{titlesec}
\titleformat{\section}[block]
  {\centering\large\bfseries}
  {\thesection.}{0.65em}{}
\titleformat{\subsection}
  {\normalsize\bfseries}
  {\thesubsection}{0.65em}{}
\titleformat{\subsubsection}[runin]
  {\normalsize\itshape}
  {\thesubsubsection}{0.65em}{}[.]
\titlespacing*{\section}
  {0pt}{2.7ex plus .6ex minus .2ex}{1.1ex}
\titlespacing*{\subsection}
  {0pt}{2.1ex plus .4ex minus .2ex}{0.7ex}
\titlespacing*{\subsubsection}
  {0pt}{1.7ex plus .3ex minus .2ex}{0.8em}

\usepackage{titling}
\pretitle{\begin{center}\LARGE}
\posttitle{\par\end{center}\vspace{-0.75em}}
\preauthor{\begin{center}\large}
\postauthor{\par\end{center}\vspace{-1.33em}}
\predate{\begin{center}\large}
\postdate{\par\end{center}\vspace{0.17em}}
\renewenvironment{abstract}{  \begin{center}\bfseries Abstract\end{center}  \vspace{-0.45em}  \par\noindent\small
}{  \par
}

\usepackage[round,authoryear]{natbib}
\usepackage{indentfirst}
\usepackage[bottom,hang,flushmargin]{footmisc}
\usepackage{xcolor}
\definecolor{ink}{HTML}{000000}
\definecolor{linkblue}{HTML}{315A75}
\color{ink}

\usepackage[
  colorlinks=true,
  hyperfootnotes=false,
  linkcolor=linkblue,
  citecolor=linkblue,
  urlcolor=linkblue,
  filecolor=linkblue,
  pdfborder={0 0 0}
]{hyperref}
\usepackage[capitalize,noabbrev]{cleveref}

\newtheoremstyle{paperstyle}
  {8pt}{8pt}{\itshape}{}{\bfseries}{.}{0.5em}{}
\theoremstyle{paperstyle}
\newtheorem{proposition}{Proposition}[section]

\theoremstyle{definition}
\newtheorem{assumption}{Assumption}

\crefname{assumption}{assumption}{assumptions}
\Crefname{assumption}{Assumption}{Assumptions}
\crefname{proposition}{proposition}{propositions}
\Crefname{proposition}{Proposition}{Propositions}
\crefname{remark}{remark}{remarks}
\Crefname{remark}{Remark}{Remarks}
\crefname{corollary}{corollary}{corollaries}
\Crefname{corollary}{Corollary}{Corollaries}

\newcommand{\E}{\mathbb{E}}

\newcommand{\Var}{\operatorname{Var}}

\newcommand{\ind}{\mathds{1}}

\newcommand{\SimMaxBias}{0.011}
\newcommand{\SimCoverageMin}{0.940}
\newcommand{\SimCoverageMax}{0.968}
\newcommand{\StdMaxBias}{0.011}
\newcommand{\MhrsCountPre}{2,963}
\newcommand{\MhrsCountPost}{2,874}

\newcommand{\ComparisonStateCount}{47}

\newcommand{\AggregateATT}{4.64}
\newcommand{\AggregateLower}{0.75}
\newcommand{\AggregateUpper}{8.44}
\newcommand{\HSLessRelative}{3.20}
\newcommand{\SomeCollegeRelative}{-0.77}
\newcommand{\BAPlusRelative}{-2.45}
\newcommand{\HSLessCombined}{7.85}
\newcommand{\SomeCollegeCombined}{3.87}
\newcommand{\BAPlusCombined}{2.20}
\newcommand{\HSLessStandardizedChange}{10.15}
\newcommand{\TargetAverageChange}{6.95}
\newcommand{\LowHighContrast}{5.65}
\newcommand{\LowHighLower}{2.78}
\newcommand{\LowHighUpper}{8.47}
\newcommand{\EducationOmnibus}{15.58}

\newcommand{\BRFSSLow}{6.56}
\newcommand{\BRFSSMiddle}{2.51}
\newcommand{\BRFSSHigh}{2.73}

\newcommand{\OASection}[1]{Online Appendix Section~\ref{#1}}

\title{Heterogeneous Policy Effects in Comparative Case Studies with Treated-Unit Microdata
}

\author{  Carl Bonander\thanks{    Karlstad Business School and the Center for Societal Risk Research,
    Karlstad University, Karlstad, Sweden.
    Correspondence: \href{mailto:carl.bonander@kau.se}{carl.bonander@kau.se}.
  }
}

\date{September 23, 2026}

\begin{document}

\pagenumbering{gobble}

\maketitle
\thispagestyle{empty}

\begin{center}
\begin{minipage}{0.80\textwidth}

\begin{abstract}

Policy reforms are sometimes accompanied by detailed individual-level data in the implementing jurisdiction, while only aggregate outcomes are available for potential comparison jurisdictions. This article develops an identification framework for heterogeneous policy effects when individual-level data are unavailable for the comparison units. The framework combines treatment-effect contrasts from difference-in-differences comparisons within the treated jurisdiction with a compatible population-average effect identified from aggregate panel data. Identification requires relative parallel trends within the treated jurisdiction together with the assumptions needed to identify the population-average effect from the aggregate panel. The within-jurisdiction component can be estimated from repeated cross-sections with a single pretreatment period.

\end{abstract}

\vspace{0.65em}

\small\noindent
\textbf{Keywords:}
causal inference;
treatment effect heterogeneity;
comparative case studies;
difference-in-differences;
synthetic control

\end{minipage}
\end{center}

\clearpage

\pagenumbering{arabic}
\singlespacing

\section{Introduction}

Major policy reforms can prompt dedicated data collection in the implementing jurisdiction. Following the 2006 Massachusetts health reform, for example, the Massachusetts Health Reform Survey collected detailed information from nonelderly adults on insurance coverage, access to care, and affordability \citep{long2008road}. Evaluation of the Stockholm congestion-charging trial likewise drew on detailed travel surveys and administrative data \citep{eliasson2009stockholm}, while evaluation of minimum unit pricing in Scotland included repeated surveys alongside administrative and other population data \citep{katikireddi2019mup}. Such data can also contain income, age, and other characteristics relevant to treatment-effect heterogeneity. When potential comparison jurisdictions are observed only in aggregate, these characteristics may be unavailable or measured much more coarsely outside the jurisdiction implementing the policy.

Systematic variation in policy effects can matter for targeting and subsequent policy design \citep{manski2004statistical,athey2021policy}. Methods that estimate effect heterogeneity by observed characteristics generally require comparable individual-level information on those characteristics in treated and comparison populations. Recent difference-in-differences methods, for example, use repeated cross-sections from both populations to estimate heterogeneous effects \citep{deb2024flexible,wooldridge2026nonlinear}. Related methods use richer individual-level outcome information from treated and comparison units: changes-in-changes uses individual-level outcome distributions in treated and comparison populations \citep{athey2006identification}, while extensions of synthetic controls use individual observations \citep[e.g.,][]{robbins2017framework,abadie2021penalized} or outcome distributions \citep{gunsilius2023distributional} from the comparison units.

In this paper, I develop an identification framework for heterogeneous treatment effects when individual-level data are available only in the treated jurisdiction. Within that jurisdiction, differences in pre--post outcome changes across values of an observed characteristic can identify differences in treatment effects under a relative parallel-trends assumption, as in subgroup and factorial difference-in-differences designs \citep{shahn2023subgroup,shahn2024group,xu2026factorial}. These comparisons determine how treatment effects vary around their population average, but not the average itself. An aggregate comparative-case design can provide the missing anchor by identifying the population-average effect from aggregate panel data using synthetic-control, difference-in-differences, or related causal panel estimators that accommodate a single treated unit \citep[e.g.,][]{abadie2010synthetic,xu2017generalized,arkhangelsky2021synthetic,ben2021augmented,athey2021matrix}. Combining the two components then identifies treatment effects at each value of the characteristic without requiring individual-level data from the comparison jurisdictions. In an empirical application to the Flint water crisis, \citet{trejo2024flint} use a closely related construction with a synthetic-control counterfactual when subgroup characteristics are unavailable for the comparison units. The framework developed here formalizes the identifying conditions for this data structure and characterizes how the result changes when observed covariates predict untreated trends or treatment effects.

Section~\ref{sec:setup} develops the identification argument, including the roles of covariate adjustment, standardization, and pretreatment information. Section~\ref{sec:estimation} discusses estimation and inference. Section~\ref{sec:numerical} examines finite-sample behavior, and Section~\ref{sec:massachusetts} applies the framework to the Massachusetts health reform and compares the results with estimates using individual-level data from the comparison states. Section~\ref{sec:discussion} concludes.

\section{Setup and identification}
\label{sec:setup}

Consider a policy adopted in one jurisdiction. Aggregate outcomes are observed for that jurisdiction and a set of untreated comparison jurisdictions, while individual-level repeated cross-sections are observed within the treated jurisdiction before and after treatment. I use \emph{aggregate comparative-case design} to refer to a design based on the aggregate panel that identifies the population-average policy effect using the treated jurisdiction and the comparison jurisdictions. The repeated cross-sections within the treated jurisdiction are used to recover how effects vary across population groups.

\subsection{A two-group illustration}
\label{subsec:simple-illustration}

To illustrate the basic argument, suppose the policy effect is of interest for two population groups, indexed by $g\in\{0,1\}$, both of which are exposed to the policy after its introduction. Let $\Delta_g$ denote the observed pre--post change and $\tau_g$ the treatment effect for group $g$. If the untreated outcome would have changed by the same amount $\delta$ in both groups, then
\[
\Delta_0=\delta+\tau_0,
\qquad
\Delta_1=\delta+\tau_1,
\qquad\text{so that}\qquad
\Delta_1-\Delta_0=\tau_1-\tau_0.
\]
The difference-in-differences comparison therefore identifies the difference between the two treatment effects, but not either effect separately.

Let $p$ denote the population share of group 1. If a comparative-case analysis identifies the population-average effect $\bar\tau=(1-p)\tau_0+p\tau_1$, and $\bar\Delta=(1-p)\Delta_0+p\Delta_1$ denotes the corresponding population-average observed change, the average and the contrast determine both group effects:
\[
\tau_g=\bar\tau+\Delta_g-\bar\Delta,
\qquad g\in\{0,1\}.
\]

\subsection{Identification from within-jurisdiction outcome changes}
\label{subsec:relative-identification}

I now formalize this argument for a general observed characteristic in repeated cross-sections. Let $t\in\{0,1\}$ index the two repeated cross-sections and let $a\in\{0,1\}$ denote policy exposure, with $a=0$ corresponding to no policy and $a=1$ to exposure to the policy. The observed outcomes satisfy $Y_0=Y_0(0)$ before treatment and $Y_1=Y_1(1)$ after treatment.\footnote{This notation incorporates consistency of the observed outcome with the realized policy state and no anticipation, so that subsequent policy adoption does not affect the pretreatment potential outcome.}

Let $Z$ denote a predetermined characteristic, or vector of characteristics, along which treatment-effect heterogeneity is of interest. Because different individuals may be observed in the two periods, let $\E_t$ denote expectation in the population represented by the period-$t$ cross-section and write $\mu_t^a(z)=\E_t[Y_t(a)\mid Z=z]$. Assumption~\ref{ass:relative-parallel-trends} concerns changes in these period-specific population means rather than longitudinal changes for the same individuals.\footnote{The same argument can be applied when individual-level panel data are available, but I focus on repeated cross-sections because the framework does not require the same individuals to be followed over time.}

The post-treatment effect and the observed pre--post change at $Z=z$ are
\[
\tau(z)=\mu_1^1(z)-\mu_1^0(z),
\qquad
\Delta(z)=\mu_1^1(z)-\mu_0^0(z).
\]

The basic identification argument assumes a common untreated change across values of $Z$.

\begin{assumption}[Relative parallel trends]
\label{ass:relative-parallel-trends}
For every $z$ in the common support of $Z$ in the period-$0$ and period-$1$ populations,
\begin{equation}
\mu_1^0(z)-\mu_0^0(z)=\delta
\label{eq:relative-parallel-trends}
\end{equation}
for an unrestricted scalar $\delta$.
\end{assumption}

Assumption~\ref{ass:relative-parallel-trends} imposes parallel untreated changes across values of $Z$ within the treated jurisdiction. In a conventional two-group difference-in-differences comparison, one group remains untreated. Here all subgroups receive the policy, so differences in their observed changes identify differences in treatment effects rather than an effect relative to an untreated subgroup.\footnote{For discrete $Z$, the corresponding subgroup parallel-trends condition is studied by \citet{shahn2023subgroup} and \citet{shahn2024group}; related identification arguments for treatment-effect heterogeneity are also developed by \citet{xu2026factorial}.} Untreated outcome levels may differ across groups, while $\delta$ collects period changes that would affect all groups equally in the absence of the policy. Under the assumption, $\Delta(z)=\delta+\tau(z)$.

Let $F_Z^\star$ denote the distribution of $Z$ in the target population, with support contained in the common support of $Z$ in the period-$0$ and period-$1$ populations, and define
\begin{equation}
\bar\tau
=
\int\tau(z)\,dF_Z^\star(z).
\label{eq:average-effect}
\end{equation}
The corresponding population-average observed change is $\bar\Delta=\int\Delta(z)\,dF_Z^\star(z)$. Substituting $\Delta(u)=\delta+\tau(u)$ gives
\begin{align}
\Delta(z)-\bar\Delta
&=
\{\delta+\tau(z)\}
-
\int\{\delta+\tau(u)\}\,dF_Z^\star(u)
\nonumber\\
&=
\delta+\tau(z)-\delta-\bar\tau
\nonumber\\
&=
\tau(z)-\bar\tau.
\label{eq:identified-heterogeneity}
\end{align}
Thus the within-jurisdiction comparison identifies the treatment-effect contrast $h(z)=\tau(z)-\bar\tau$, but leaves $\bar\tau$ undetermined.

Recovering effect levels additionally requires identification of the population-average effect for the same causal target.

\begin{assumption}[Aggregate identification and compatibility]
\label{ass:estimand-compatibility}
The aggregate comparative-case design identifies $\bar\tau$ in \cref{eq:average-effect} for the same outcome, policy contrast, post-treatment period, and target population as the repeated-cross-sectional analysis.
\end{assumption}

In practice, the aggregate and within-jurisdiction analyses must use the same outcome scale and represent the same target population; for example, a population-average effect for the full population cannot anchor subgroup effects defined only among nonelderly adults. Compatibility also concerns the post-treatment period represented by each analysis. If the aggregate comparative-case estimator averages effects over several post-treatment periods while the repeated cross-sections represent a particular period, the two generally target different effects unless treatment effects are approximately constant over time.

\begin{proposition}[Identification of heterogeneous effects]
\label{prop:anchored-identification}
Suppose that \cref{ass:relative-parallel-trends,ass:estimand-compatibility} hold. Then $\tau(z)$ is identified throughout the common support of $Z$ in the period-$0$ and period-$1$ populations, with
\begin{equation}
\tau(z)
=
\bar\tau+\Delta(z)-\bar\Delta.
\label{eq:anchored-effect}
\end{equation}
\end{proposition}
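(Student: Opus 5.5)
The plan is to combine the algebraic identity already derived in \cref{eq:identified-heterogeneity} with \cref{ass:estimand-compatibility}. I will first make explicit what "identified" means here: a quantity is identified if it can be written as a functional of the distributions observed in the data, namely the period-$0$ and period-$1$ repeated cross-sections from the treated jurisdiction and the aggregate panel. I then show that each term on the right-hand side of \cref{eq:anchored-effect} has this property, and that the equation holds as an identity under the assumptions.

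First, I establish that $\Delta(z)$ is identified on the common support. By the consistency and no-anticipation conventions, $Y_0=Y_0(0)$ and $Y_1=Y_1(1)$. Hence $\mu_0^0(z)=\E_0[Y_0\mid Z=z]$ and $\mu_1^1(z)=\E_1[Y_1\mid Z=z]$ are conditional means of observed outcomes. Both are well defined at every $z$ in the common support of $Z$ in the two period populations. So $\Delta(z)$ is identified there.

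Second, since $F_Z^\star$ is supported inside this common support and is taken as known (the target population's covariate distribution), $\bar\Delta=\int\Delta(z)\,dF_Z^\star(z)$ is an integral of an identified function against a known measure, and is therefore identified. I will note the implicit integrability requirement: $\Delta$ must be $F_Z^\star$-integrable, which also makes $\bar\tau$ in \cref{eq:average-effect} finite. I will state this as a regularity condition rather than treat it as a real obstacle.

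Third, by \cref{ass:relative-parallel-trends}, $\Delta(z)=\delta+\tau(z)$ for every $z$ in the common support. Integrating against $F_Z^\star$ reproduces \cref{eq:identified-heterogeneity}, giving $\Delta(z)-\bar\Delta=\tau(z)-\bar\tau$. The unknown $\delta$ cancels because $F_Z^\star$ is a probability measure. Rearranging yields \cref{eq:anchored-effect}.

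Finally, \cref{ass:estimand-compatibility} says the aggregate design identifies exactly this $\bar\tau$: the same outcome, the same post-period, and the same $F_Z^\star$. So the right-hand side is a sum of identified quantities, and $\tau(z)$ is identified.

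The only delicate point is the role of compatibility. I need to stress that the $\bar\tau$ delivered by the aggregate design must coincide with the integral in \cref{eq:average-effect} under the same $F_Z^\star$ used to form $\bar\Delta$. Otherwise the cancellation in the third step fails: a mismatch in the target population or post-period would leave a nonzero residual $\int\tau\,dF_Z^\star-\bar\tau_{\text{agg}}$ in the formula. Everything else in the argument is direct substitution.
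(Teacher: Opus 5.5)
Your proposal is correct and is essentially the paper's proof: relative parallel trends gives $\Delta(z)-\bar\Delta=\tau(z)-\bar\tau$, the left side is determined by observed conditional means and the target distribution of $Z$, and \cref{ass:estimand-compatibility} supplies $\bar\tau$, so adding it yields \cref{eq:anchored-effect}. Your extra details, such as writing $\mu_0^0$ and $\mu_1^1$ as observed conditional means and the integrability remark, make explicit what the paper leaves implicit; one small correction is that a target mismatch would not undo the cancellation of $\delta$, which holds regardless, but would invalidate substituting the aggregate estimand for $\bar\tau$.
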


\noindent\emph{Proof.} In Appendix~\ref{app:proofs}.

\subsection{Covariate adjustment and standardization}
\label{subsec:covariate-adjustment}

Relative parallel trends across $Z$ may fail when the groups being compared differ in characteristics that predict untreated outcome changes. For example, if age predicts secular changes in an outcome and age composition differs across income groups, untreated outcome changes may differ across those groups.

A covariate $X$ unaffected by the policy can matter for two distinct reasons. First, if $X$ predicts untreated outcome changes and its distribution differs across $Z$, failing to condition on $X$ can violate relative parallel trends. Second, treatment effects may also vary with $X$, in which case the effect associated with $Z=z$ depends on the distribution of $X$ over which it is averaged. These roles need not coincide. If $X$ affects untreated changes but treatment effects depend only on $Z$, conditioning on $X$ addresses the identifying assumption without changing the subgroup-effect target. When treatment effects also vary with $X$, standardization to a common $X$ distribution instead defines a treatment-effect comparison that holds covariate composition fixed.

Write $\mu_t^a(z,x)=\E_t[Y_t(a)\mid Z=z,X=x]$ and $\Delta(z,x)=\mu_1^1(z,x)-\mu_0^0(z,x)$. The conditional version of relative parallel trends is
\begin{equation}
\mu_1^0(z,x)-\mu_0^0(z,x)=m(x),
\label{eq:conditional-relative-parallel-trends}
\end{equation}
where $m$ is unrestricted. Counterfactual changes may therefore vary with $X$, but not additionally with $Z$ after conditioning on $X$. A sufficient overlap condition is that, for every $z$ in the common support of $Z$, the support of $X$ in the target population is contained in the support of $X\mid Z=z$ in both period-specific populations, with positive probability or density as appropriate.

If treatment effects depend on $Z$ but not further on $X$, then $\Delta(z,x)=m(x)+\tau(z)$. Under the overlap condition above, centering the $Z$-specific component and combining it with $\bar\tau$ identifies $\tau(z)$. Because the treatment effect does not vary further with $X$, this is also the treatment effect averaged over the $X$ distribution represented within subgroup $z$.

When treatment effects also vary with $X$, write $\tau(z,x)=\mu_1^1(z,x)-\mu_1^0(z,x)$. Let $F_{X,Z}^\star$ denote the target joint distribution of $(X,Z)$ and $F_X^\star$ its marginal distribution of $X$. The effect for $Z=z$ standardized to this common distribution is
\[
\tau^S(z)
=
\int\tau(z,x)\,dF_X^\star(x),
\]
with corresponding standardized observed change
\[
\Delta^S(z)
=
\int\Delta(z,x)\,dF_X^\star(x).
\]
The actual target-population averages remain
\[
\bar\Delta
=
\int\Delta(z,x)\,dF_{X,Z}^\star(x,z),
\qquad
\bar\tau
=
\int\tau(z,x)\,dF_{X,Z}^\star(x,z).
\]

Under \cref{eq:conditional-relative-parallel-trends},
\[
\Delta^S(z)
=
\int m(x)\,dF_X^\star(x)+\tau^S(z),
\qquad
\bar\Delta
=
\int m(x)\,dF_X^\star(x)+\bar\tau,
\]
where the second equality follows because $F_X^\star$ is the $X$-marginal of $F_{X,Z}^\star$. The same marginal average of $m(X)$ therefore cancels when the two quantities are differenced, so no separate aggregate estimate for a standardized population is needed. The same anchoring argument as in Proposition~\ref{prop:anchored-identification} then gives the following result.

\begin{proposition}[Identification of standardized heterogeneous effects]
\label{prop:standardized-effects}
Suppose that \cref{eq:conditional-relative-parallel-trends} holds. Assume also that, for every $z$ in the common support of $Z$ in the period-$0$ and period-$1$ populations, the support of $F_X^\star$ is contained in the support of $X\mid Z=z$ in both periods. If the comparative-case design identifies $\bar\tau$ for the target population, then
\begin{equation}
\tau^S(z)
=
\bar\tau+\Delta^S(z)-\bar\Delta.
\label{eq:standardized-anchored-effect}
\end{equation}
\end{proposition}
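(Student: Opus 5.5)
The identity follows by computing both sides directly from the definitions of $\tau^S$, $\Delta^S$, $\bar\tau$ and $\bar\Delta$, in the same way as \cref{eq:identified-heterogeneity}, with \cref{eq:conditional-relative-parallel-trends} playing the role of \cref{ass:relative-parallel-trends}. I would first establish well-definedness, then verify the pointwise decomposition $\Delta(z,x)=m(x)+\tau(z,x)$, then integrate it against the two distributions and difference.

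\textbf{Step 1: well-definedness.} Fix $z$ in the common support of $Z$ in the two period-specific populations. By the overlap assumption, every $x$ in the support of $F_X^\star$ lies in the support of $X\mid Z=z$ in both periods. Hence $\mu_0^0(z,x)$ and $\mu_1^1(z,x)$ are defined for $F_X^\star$-almost every $x$, and so is $\Delta(z,x)$. The counterfactual $\mu_1^0(z,x)$, and therefore $\tau(z,x)$, is defined on the same set. This makes $\Delta^S(z)$ and $\tau^S(z)$ well defined, assuming the integrals exist; I would state integrability as a maintained regularity condition. For $\bar\Delta$ and $\bar\tau$, I take the target joint support to lie in the region where the conditional means are defined. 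This is implicit in the statement that the comparative-case design identifies $\bar\tau$ for the target population.

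\textbf{Step 2: decomposition.} For each such $(z,x)$, add and subtract $\mu_1^0(z,x)$:
\[
\Delta(z,x)=\{\mu_1^1(z,x)-\mu_1^0(z,x)\}+\{\mu_1^0(z,x)-\mu_0^0(z,x)\}=\tau(z,x)+m(x),
\]
where the second equality uses \cref{eq:conditional-relative-parallel-trends}.

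\textbf{Step 3: integration and differencing.} Integrating the decomposition against $F_X^\star$ gives
\[
\Delta^S(z)=\int m\,dF_X^\star+\tau^S(z).
\]
Integrating it against $F_{X,Z}^\star$ gives
\[
\bar\Delta=\int m(x)\,dF_{X,Z}^\star(x,z)+\bar\tau=\int m\,dF_X^\star+\bar\tau,
\]
because $m$ depends only on $x$ and $F_X^\star$ is the $X$-marginal of $F_{X,Z}^\star$. Subtracting, the common term cancels, so
\[
\Delta^S(z)-\bar\Delta=\tau^S(z)-\bar\tau.
\]
Rearranging yields \cref{eq:standardized-anchored-effect}.

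\textbf{Step 4: identification.} The right-hand side is identified. $\bar\tau$ is identified by hypothesis. $\Delta^S(z)$ and $\bar\Delta$ are integrals of observable conditional means, namely $\mu_1^1$ from the period-$1$ cross-section and $\mu_0^0$ from the period-$0$ cross-section, against the known target distributions. Hence $\tau^S(z)$ is identified.

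\textbf{Main obstacle.} The algebra is routine; the delicate point is the support bookkeeping in Step 1. The cancellation of $\int m\,dF_X^\star$ requires $m$ to be integrated against the same marginal in both quantities. It also requires $m(x)$ to be defined on the whole support of $F_X^\star$, and the overlap condition is exactly what guarantees this.
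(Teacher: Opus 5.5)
Your proof is correct and follows the paper's argument: decompose $\Delta(z,x)=m(x)+\tau(z,x)$, integrate against $F_X^\star$ and against $F_{X,Z}^\star$, use that $F_X^\star$ is the $X$-marginal of $F_{X,Z}^\star$ so the two $m$-terms coincide, and difference. Your Step~1 support bookkeeping and Step~4 identification remark are details the paper leaves implicit; the joint-support condition you add for $\bar\Delta$ already follows from the stated overlap assumption, because the target $Z$-support lies in the common support.
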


\noindent\emph{Proof.} In Appendix~\ref{app:proofs}.

Unlike the treatment-effect contrast $h(z)$ in \cref{eq:identified-heterogeneity}, $\tau^S(z)-\bar\tau$ need not average to zero over the marginal distribution of $Z$. Each $\tau^S(z)$ averages over the same marginal distribution $F_X^\star$, whereas $\bar\tau$ averages over the observed joint distribution $F_{X,Z}^\star$.

The standardized effect also differs from the treatment effect averaged over the covariate distribution actually observed within subgroup $z$,
\begin{equation}
\tau^J(z)
=
\int\tau(z,x)\,dF_{X\mid Z=z}^\star(x).
\label{eq:subgroup-average-effect}
\end{equation}
For example, if $Z$ denotes income and $X$ age, $\tau^S(z)$ compares income groups at a common age distribution, whereas $\tau^J(z)$ averages over the age distribution represented within each income group.

The two-period data do not generally identify $\tau^J(z)$ when $X$ affects both untreated changes and treatment effects. If $\Delta^J(z)=\int\Delta(z,x)\,dF_{X\mid Z=z}^\star(x)$ denotes the observed change averaged over subgroup $z$'s own covariate distribution, then
\[
\Delta^J(z)-\tau^J(z)
=
\int m(x)\,dF_{X\mid Z=z}^\star(x).
\]
The aggregate comparative-case design determines only the population average $\int m(x)\,dF_X^\star(x)=\bar\Delta-\bar\tau$, not the corresponding average within each subgroup. Recovering $\tau^J(z)$ therefore requires additional information about how untreated changes vary with $X$.

\subsection{Using pretreatment trends}
\label{subsec:pretreatment-trends}

Additional pretreatment waves provide information about relative changes before treatment. Under an assumption linking those changes to the treatment transition, they can also support adjustment for differential untreated trends and recovery of subgroup-average effects.

For an untreated transition ending in period $s\leq0$, let $r_s(z)$ denote the change at $Z=z$ minus the population-average change over the same transition. Pretreatment estimates of $r_s(z)$ therefore show how relative trends evolved before treatment.

Let $r_1(z)$ denote the corresponding counterfactual differential trend over the treatment transition. Under an assumption that determines $r_1(z)$ from its pretreatment history,
\begin{equation}
\tau(z)-\bar\tau
=
\Delta(z)-\bar\Delta-r_1(z).
\label{eq:trend-adjusted-heterogeneity}
\end{equation}
For equally spaced periods, the restriction $r_1(z)=r_0(z)$ carries the most recent differential pretrend into the treatment transition, as in sequential difference-in-differences arguments \citep{egami2023multiple}. Alternatively, pretreatment differences can be used to restrict subsequent departures from parallel trends rather than determine them exactly \citep{rambachan2023credible}.

Pretreatment information can also support recovery of the subgroup-average effect $\tau^J(z)$ when $X$ affects both untreated changes and treatment effects. The additional requirement is an assumption that identifies how the counterfactual untreated change over the treatment transition varies with $X$ relative to its population average; the overall average untreated change is still supplied by $\bar\Delta-\bar\tau$. The resulting expression and plug-in estimator are given in \OASection{sec:pretrend}.

\section{Estimation and inference}
\label{sec:estimation}

\subsection{Estimation}
\label{subsec:estimation}

Estimation combines an estimate of the within-jurisdiction treatment-effect contrast with an estimate of the population-average effect $\bar\tau$ from the aggregate comparative-case estimator. Under Assumption~\ref{ass:relative-parallel-trends}, combining these components gives the plug-in estimator
\[
\widehat\tau(z)
=
\widehat{\bar\tau}
+
\widehat\Delta(z)
-
\int\widehat\Delta(u)\,d\widehat F_Z^\star(u).
\]
For categorical $Z$, $\widehat\Delta(z)$ can be estimated from group-specific pre--post changes. For continuous $Z$, it can instead be estimated from a model for the conditional mean change.

Under \cref{eq:conditional-relative-parallel-trends}, when treatment effects depend on $Z$ but not further on $X$, the additive components in $\Delta(z,x)=m(x)+\tau(z)$ can be estimated from repeated cross-sections. For categorical $Z$, a repeated-cross-sectional regression with group-by-post interactions and post-period changes allowed to vary with $X$ provides a convenient implementation. If $\widehat a(z)$ denotes the fitted $Z$-specific component, the corresponding effect estimate is $\widehat{\bar\tau}+\widehat a(z)-\int\widehat a(u)\,d\widehat F_Z^\star(u)$.

For standardized effects, the conditional changes are averaged over the common marginal covariate distribution and the joint target distribution, respectively. If $\widehat\Delta(z,x)$ denotes an estimate of the conditional change, the standardized estimator is
\begin{equation}
\widehat\tau^S(z)
=
\widehat{\bar\tau}
+
\int\widehat\Delta(z,x)\,d\widehat F_X^\star(x)
-
\int\widehat\Delta(u,x)\,d\widehat F_{X,Z}^\star(x,u).
\label{eq:estimated-standardized-effect}
\end{equation}
The trend-adjusted and multi-period estimators use the same plug-in logic. The subgroup-average pretreatment extension is given in \OASection{sec:pretrend}.

\subsection{Inference}
\label{subsec:inference}

For any of the estimands above, write the estimator generically as $\widehat\tau(z)=\widehat{\bar\tau}+\widehat c(z)$, where $\widehat c(z)$ is the within-jurisdiction contribution. When the two components are estimated from independent samples, a natural variance estimator is
\[
\widehat{\Var}\{\widehat\tau(z)\}
=
\widehat{\Var}(\widehat{\bar\tau})
+
\widehat{\Var}\{\widehat c(z)\}.
\]
Pointwise confidence intervals can then be obtained using a normal approximation. For standardized effects, $\widehat c(z)$ is the difference between two averages obtained from the same fitted conditional-change function, so its variance should include their covariance as well as uncertainty in the estimated target distribution. A bootstrap can account for these sources of uncertainty by refitting the conditional-change model, re-estimating the target distribution, and recomputing both averages within each resample. The same principle applies when the within-jurisdiction component includes an estimated pretreatment-trend adjustment.

When bootstrap distributions are available for both components, uncertainty can instead be propagated through component bootstrap errors. Let $e_A^{*(b)}$ denote a bootstrap error for $\widehat{\bar\tau}$ and $e_C^{*(b)}(z)$ the corresponding error for $\widehat c(z)$. When the two components are estimated from independent samples, independent bootstrap errors can be combined as
\begin{equation}
\widehat\tau^{*(b)}(z)
=
\widehat\tau(z)
+
e_A^{*(b_A)}
+
e_C^{*(b_C)}(z).
\label{eq:independent-bootstrap}
\end{equation}
The bootstrap distribution of $\widehat c(z)$ should incorporate uncertainty from the estimated target distribution and any pretreatment-trend adjustment. If the same data contribute to both components, their covariance must also be accounted for. When both estimates can be recomputed from a common resample, a joint bootstrap provides one way to preserve this dependence.

When the aggregate comparative-case estimator supplies a confidence interval but no bootstrap distribution, component intervals can still be combined conservatively. If $[L_A,U_A]$ is an interval for $\bar\tau$ with coverage at least $1-\alpha_A$ and $[L_C(z),U_C(z)]$ is an interval for $c(z)$ with coverage at least $1-\alpha_C$, their endpoint sum $[L_A+L_C(z),\,U_A+U_C(z)]$ has pointwise coverage at least $1-\alpha_A-\alpha_C$ by Bonferroni's inequality, without an independence assumption.

\section{Numerical illustrations}
\label{sec:numerical}

I use two Monte Carlo experiments to examine two main components of the framework. The first asks whether combining a separately estimated population-average effect with within-jurisdiction treatment-effect contrasts introduces appreciable finite-sample bias or coverage distortion, and how precision responds to information in the aggregate panel and repeated cross-sections. Aggregate outcomes for 40 jurisdictions follow a rank-two interactive fixed-effects model, with one jurisdiction treated in the final period. Independent repeated cross-sections within the treated jurisdiction contain three population groups with shares $(0.35,0.30,0.35)$ and treatment effects $(0.35,0.25,0.15)$, giving a population-average effect of 0.25. I estimate the population-average effect using the generalized synthetic control method \citep{xu2017generalized}, with the factor rank fixed at its true value. The designs combine $T_0\in\{10,30\}$ pretreatment periods with $n\in\{500,2500\}$ observations per survey wave; complete data-generating processes and subgroup-specific results are given in \OASection{sec:simulations}.

\begin{table}[t]
\centering
\caption{Finite-sample performance of the combined estimator}
\label{tab:simulation}
\begin{threeparttable}
\small
\begin{tabular}{S[table-format=2.0] S[table-format=4.0] S[table-format=1.3] S[table-format=1.3] c}
\toprule
{$T_0$} & {$n$ per wave} & {Weighted mean $|$bias$|$} & {Weighted mean RMSE} & {95\% coverage range} \\
\midrule
10 & 500 & 0.005 & 0.275 & 0.940--0.964 \\
10 & 2500 & 0.001 & 0.271 & 0.956--0.964 \\
30 & 500 & 0.004 & 0.237 & 0.954--0.968 \\
30 & 2500 & 0.009 & 0.223 & 0.946--0.958 \\
\bottomrule
\end{tabular}
\begin{tablenotes}[flushleft]
\footnotesize
\item \textit{Notes:} Bias and RMSE refer to the three subgroup-specific combined treatment-effect estimators. Weighted mean $|$bias$|$ is the target-population-share-weighted mean of the absolute subgroup biases, and weighted mean root mean squared error (RMSE) is the corresponding weighted mean of the subgroup RMSEs. Bias and RMSE are based on 1,000 Monte Carlo replications. Coverage reports the range across the three subgroup-specific nominal 95\% Wald intervals and is based on 500 replications.
\end{tablenotes}
\end{threeparttable}
\end{table}

The combined estimator shows little finite-sample bias, and coverage remains close to the nominal level across all four designs (Table~\ref{tab:simulation}). The aggregate component accounts for the larger share of variance in this design, so increasing the pretreatment history from 10 to 30 periods reduces root mean squared error more than increasing the repeated-cross-sectional sample from 500 to 2,500 observations. The largest absolute subgroup bias is \SimMaxBias, and coverage of nominal 95\% intervals ranges from \SimCoverageMin\ to \SimCoverageMax.

The second experiment investigates whether standardization recovers effects defined at a common $X$ distribution when $X$ affects both untreated trends and treatment effects. The distribution of a continuous covariate $X$ differs across the three groups, untreated changes include the term $0.30X$, and treatment effects include $0.15X$. The standardized effects are 0.35, 0.25, and 0.15. Without standardization, differences in $X$ composition instead imply population values of 0.08, 0.25, and 0.42 for the three group-specific effects constructed from the observed changes and population-average anchor. Figure~\ref{fig:standardized-simulation} compares these quantities with Monte Carlo mean estimates from 1,000 replications, using $T_0=20$ and 2,000 observations per survey wave. As expected, the unadjusted estimates center on the composition-driven values, whereas the standardized estimates recover the target standardized effects with a maximum absolute bias of \StdMaxBias.

\begin{figure}[t]
\centering
\includegraphics[width=0.72\textwidth]{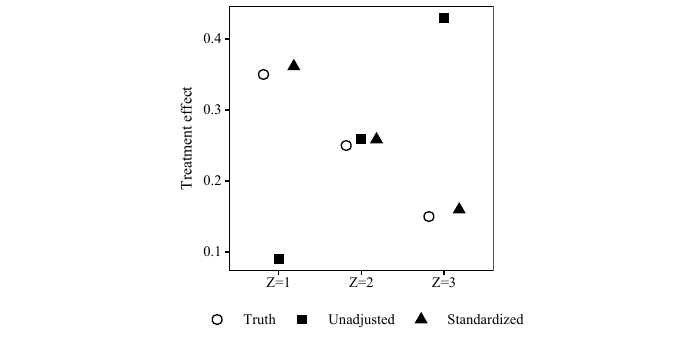}
\caption{Covariate standardization. Symbols show the target standardized effects and Monte Carlo mean estimates of the unadjusted and standardized estimators across 1,000 replications. The design uses $T_0=20$ and $n=2{,}000$ observations per survey wave.}
\label{fig:standardized-simulation}
\end{figure}

\section{Massachusetts health reform}
\label{sec:massachusetts}

Massachusetts's health reform sought near-universal insurance coverage through expanded subsidized coverage and an individual mandate \citep{long2008road}. I illustrate the framework using this reform. The Massachusetts Health Reform Survey (MHRS) provides detailed repeated cross-sections within the state around the main implementation of the reform, including education, age, sex, and current insurance coverage. I construct a state-level comparison panel from insurance coverage rates in the Behavioral Risk Factor Surveillance System (BRFSS). The combined estimator requires only aggregate coverage rates from the comparison states. Because the underlying BRFSS microdata are also available, the application also permits direct education-specific comparative-case analyses that use individual-level information from those states. Additional data and estimation details are given in \OASection{sec:mass-data}.

Massachusetts enacted its health-insurance reform in 2006, but the major components took effect during 2007 \citep{long2008road}. Consistent with previous evaluations, I use the 2006 MHRS as the baseline wave: although some provisions were already in place, the survey was fielded before the main components of the reform took effect. For the aggregate analysis, I date the policy transition from 2007Q1 and evaluate the effect in 2007Q4, matching the timing of the 2007 MHRS. For both components, the target population is Massachusetts adults aged 18--64 and the outcome is current health insurance coverage. I use the quarterly BRFSS panel to estimate the population-average effect and the 2007 MHRS to define the target covariate distribution and estimate the within-jurisdiction treatment-effect contrasts.

For the aggregate comparative-case analysis, I use the generalized synthetic control method \citep{xu2017generalized}, which models untreated outcomes with unit and period effects and a low-rank interactive fixed-effects component. I select the number of latent factors by rolling pretreatment cross-validation, as implemented in the \texttt{fect} package for R \citep{liu2024practical}. The selected specification contains no interactive factors, so the fitted counterfactual reduces to an additive state-and-period outcome model, closely related to the imputation-based difference-in-differences specification in \citet{borusyak2024revisiting}. For inference on the population-average effect, I use the parametric bootstrap procedure in \citet{xu2017generalized}.

The fitted untreated trajectory closely tracks the observed pretreatment series (Figure~\ref{fig:massachusetts-application}, panel~(a)). In 2007Q4, observed coverage exceeds the estimated counterfactual by \AggregateATT\ percentage points (95\% confidence interval [\AggregateLower, \AggregateUpper]).

I examine heterogeneity across three education groups: high school education or less, some college or an associate degree, and a bachelor's degree or higher. Age and sex differ across education groups and may affect both untreated changes in coverage and responses to the reform. I therefore allow the conditional pre--post change to vary jointly with education, age, and sex and standardize each education group to the same age-sex distribution from the 2007 MHRS. The within-Massachusetts identification assumption is that, absent the 2007 policy transition, coverage would have changed equally across education groups of the same age and sex. For inference on the within-jurisdiction component, I use a survey bootstrap that respects the MHRS sampling design (see \OASection{sec:mass-data} for details). I combine independent bootstrap draws from the two components because the BRFSS panel and MHRS are based on separate survey samples.

Panel~(b) of Figure~\ref{fig:massachusetts-application} reports the resulting standardized treatment-effect contrasts. In ascending order of education, the standardized treatment-effect contrasts are \HSLessRelative, \SomeCollegeRelative, and \BAPlusRelative\ percentage points. The open circles show the corresponding unadjusted relative changes. Because each education group is standardized to the same age-sex distribution, these contrasts need not average to zero across the observed education shares.

Panel~(c) of Figure~\ref{fig:massachusetts-application} combines the population-average effect with the within-Massachusetts treatment-effect contrasts. The resulting standardized effects on insurance coverage are \HSLessCombined, \SomeCollegeCombined, and \BAPlusCombined\ percentage points in ascending order of education. For the lowest education group, the fitted standardized change is \HSLessStandardizedChange\ percentage points, while the fitted change averaged over the target joint distribution of education, age, and sex is \TargetAverageChange\ points. Their difference, \HSLessRelative\ points, is the standardized treatment-effect contrast; combining it with the population-average estimate gives a standardized effect of \HSLessCombined\ points. The estimated difference between the lowest and highest education groups is \LowHighContrast\ percentage points (95\% confidence interval [\LowHighLower, \LowHighUpper]), and an omnibus test of equal standardized effects gives $\chi^2(2)=\EducationOmnibus$ ($p<0.001$).

Panel~(c) also reports direct education-specific comparative-case estimates from BRFSS state panels standardized to the same age-sex target distribution. These analyses use individual-level BRFSS data from Massachusetts and the comparison states, information that is not required by the combined estimator. The resulting point estimates are similar: \BRFSSLow, \BRFSSMiddle, and \BRFSSHigh\ percentage points from the education-specific BRFSS panels, compared with \HSLessCombined, \SomeCollegeCombined, and \BAPlusCombined\ points from the combined estimator. Both approaches therefore produce the same broad pattern, with the largest estimated coverage increase in the lowest education group and smaller effects in the other two groups.

\begin{figure}[t]
\centering
\includegraphics[width=0.98\textwidth]{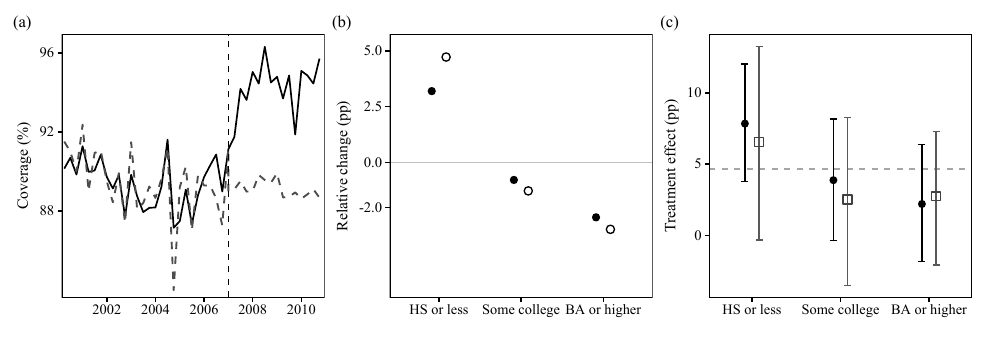}
\caption{Massachusetts health reform. Panel (a) shows observed quarterly insurance coverage among Massachusetts adults aged 18--64 (solid line) and the estimated untreated trajectory (dashed line); the vertical dashed line marks the 2007Q1 treatment onset. Panel (b) shows education-specific changes relative to the target-population change before (open circles) and after (filled circles) standardization to the 2007 Massachusetts Health Reform Survey age-sex distribution. Panel (c) shows the combined estimates (filled circles) together with estimates from separate education-specific BRFSS state panels (open squares). Vertical bars are pointwise 95\% confidence intervals, and the dashed horizontal line denotes the estimated population-average effect.}
\label{fig:massachusetts-application}
\end{figure}

\section{Discussion}
\label{sec:discussion}

This article develops an identification framework for heterogeneous policy effects in comparative case studies. Treatment-effect contrasts can be identified from repeated cross-sections within the treated jurisdiction under relative parallel trends, while a compatible population-average effect from an aggregate comparative-case design determines their level. The two components therefore identify heterogeneous effects even when individual-level data are unavailable for the comparison jurisdictions. The within-jurisdiction component requires only a single pretreatment cross-section.

I also show how the identification argument extends when untreated changes vary across population groups. Relative parallel trends can instead be imposed conditional on observed covariates, while additional pretreatment information can be used under explicit assumptions linking earlier differential trends to the treatment transition. These results clarify when heterogeneity can still be identified when relative parallel trends do not hold marginally, without requiring individual-level data from the comparison jurisdictions.

\clearpage
\appendix

\titleformat{\section}[block]
  {\centering\large\bfseries}
  {Appendix \thesection.}{0.65em}{}

\section{Proofs}
\label{app:proofs}

\begin{proof}[Proof of Proposition~\ref{prop:anchored-identification}]
Under \cref{ass:relative-parallel-trends}, $\Delta(z)-\bar\Delta = \tau(z)-\bar\tau$. The left-hand side is determined by the observed changes and the target distribution of $Z$. Under \cref{ass:estimand-compatibility}, the comparative-case design identifies $\bar\tau$. Adding $\bar\tau$ to both sides gives \cref{eq:anchored-effect}.
\end{proof}

\begin{proof}[Proof of Proposition~\ref{prop:standardized-effects}]
Under \cref{eq:conditional-relative-parallel-trends}, $\Delta(z,x)=m(x)+\tau(z,x)$. Therefore
\[
\Delta^S(z)
=
\int m(x)\,dF_X^\star(x)+\tau^S(z),
\]
while
\[
\bar\Delta
=
\int m(x)\,dF_X^\star(x)+\bar\tau,
\]
because $F_X^\star$ is the $X$-marginal of $F_{X,Z}^\star$. Subtracting the second expression from the first and rearranging gives \cref{eq:standardized-anchored-effect}.
\end{proof}

\clearpage

\clearpage

\bibliographystyle{apalike}

\bibliography{references}

\clearpage
\begin{center}
{\LARGE\bfseries Online Appendix}
\end{center}
\vspace{1em}
\setcounter{section}{0}
\setcounter{subsection}{0}
\setcounter{equation}{0}
\setcounter{table}{0}
\setcounter{figure}{0}
\renewcommand{\thesection}{OA.\arabic{section}}
\renewcommand{\thesubsection}{\thesection.\arabic{subsection}}
\renewcommand{\theequation}{OA.\arabic{equation}}
\renewcommand{\thetable}{OA.\arabic{table}}
\renewcommand{\thefigure}{OA.\arabic{figure}}
\renewcommand{\theHsection}{OA.\arabic{section}}
\renewcommand{\theHsubsection}{OA.\arabic{section}.\arabic{subsection}}
\renewcommand{\theHequation}{OA.\arabic{equation}}
\renewcommand{\theHtable}{OA.\arabic{table}}
\renewcommand{\theHfigure}{OA.\arabic{figure}}
\titleformat{\section}[block]
  {\centering\large\bfseries}
  {\thesection.}{0.65em}{}

\section{Pretreatment-trend extensions}
\label{sec:pretrend}

The main text describes two uses of pretreatment information: adjusting treatment-effect contrasts for differential untreated trends across values of $Z$, and recovering subgroup-average effects when $X$ predicts untreated changes. This section gives the corresponding transition-specific expressions.

For an untreated transition ending in period $s\leq0$, define the differential change at $Z=z$ relative to the target-population change as
\begin{equation}
r_s(z)
=
\mu_s^0(z)-\mu_{s-1}^0(z)
-
\int
\left\{
\mu_s^0(u)-\mu_{s-1}^0(u)
\right\}
\,dF_Z^\star(u).
\label{eq:oa-rs}
\end{equation}
These quantities are observed in pretreatment periods. The corresponding differential change over the treatment transition,
\[
r_1(z)
=
\mu_1^0(z)-\mu_0^0(z)
-
\int
\left\{
\mu_1^0(u)-\mu_0^0(u)
\right\}
\,dF_Z^\star(u),
\]
is counterfactual. By construction, $\int r_1(u)\,dF_Z^\star(u)=0$. Since the observed change equals the untreated change plus the treatment effect,
\[
\Delta(z)-\bar\Delta
=
r_1(z)+\tau(z)-\bar\tau.
\]
Thus an assumption that determines $r_1(z)$ from its pretreatment history identifies
\begin{equation}
\tau(z)-\bar\tau
=
\Delta(z)-\bar\Delta-r_1(z).
\label{eq:oa-trend-adjusted}
\end{equation}
For equally spaced periods, the simple restriction $r_1(z)=r_0(z)$ carries the most recent differential pretreatment change forward over the treatment transition. For example, if one education group had been gaining coverage relative to the population immediately before a reform, this restriction carries that relative gain into the untreated reform-period counterfactual rather than attributing it to treatment. With several pretreatment transitions, $r_1(z)$ can instead be linked to a longer history of differential changes.

Pretreatment observations can also provide the information needed to recover subgroup-average effects when $X$ affects both untreated changes and treatment effects. For a transition ending in period $s$, suppose the conditional untreated change satisfies
\[
\mu_s^0(z,x)-\mu_{s-1}^0(z,x)=m_s(x)
\]
and define its component relative to the target-population average by
\begin{equation}
q_s(x)
=
m_s(x)
-
\int m_s(u)\,dF_X^\star(u).
\label{eq:oa-qs}
\end{equation}
If a maintained extrapolation restriction determines $q_1(x)$ from pretreatment observations, then
\begin{align}
\Delta^J(z)-\tau^J(z)
&=
\int m_1(x)\,dF_{X\mid Z=z}^\star(x)
\nonumber\\
&=
\bar\Delta-\bar\tau
+
\int q_1(x)\,dF_{X\mid Z=z}^\star(x),
\end{align}
where the second equality follows from the definition of $q_1(x)$ and
$\int m_1(x)\,dF_X^\star(x)=\bar\Delta-\bar\tau$. Rearranging gives
\begin{equation}
\tau^J(z)
=
\bar\tau
+
\Delta^J(z)
-
\bar\Delta
-
\int q_1(x)\,dF_{X\mid Z=z}^\star(x).
\label{eq:oa-subgroup-effect}
\end{equation}
where $\Delta^J(z)$ averages the observed conditional change over the subgroup-specific distribution $F_{X\mid Z=z}^\star$. Only the relative pattern $q_1(x)$ must be extrapolated from pretreatment data: the population-average untreated change is supplied by $\bar\Delta-\bar\tau$. For example, if age-specific coverage trends differ before a reform, one can extrapolate the relative age profile of those changes and combine it with the population-average anchor to recover subgroup-average effects under each subgroup's own age distribution.

The corresponding plug-in estimators replace the conditional changes and target distributions in these expressions with their estimates. Uncertainty from estimating the pretreatment trend forms part of the within-jurisdiction component of inference.

\section{Simulation designs and supplementary results}
\label{sec:simulations}

This section documents the two Monte Carlo designs and reports subgroup-specific results.

\subsection{Finite-sample estimation and inference}

Aggregate untreated outcomes are generated for $J=40$ jurisdictions, one of which is treated in the final period. Untreated outcomes follow the rank-two interactive-fixed-effects model
\begin{equation}
A_{jt}(0)
=
\alpha_j+\delta_t+\lambda_j^{\mathsf T}f_t+u_{jt},
\label{eq:oa-panel-dgp}
\end{equation}
where $\lambda_j$ and $f_t$ are two-dimensional. The unit effects satisfy $\alpha_j\sim N(0,0.4^2)$ and $\lambda_j\sim N(0,I_2)$. The two elements of $f_t$ follow independent stationary AR(1) processes with autoregressive coefficient 0.7 and stationary standard deviation 0.7. The common time effect $\delta_t$ follows a stationary AR(1) process with autoregressive coefficient 0.5 and stationary standard deviation 0.2, while $u_{jt}\sim N(0,0.2^2)$. The innovations in these processes are mutually independent. I discard the first 50 simulated periods as burn-in. In the treated jurisdiction, the population-average post-treatment effect is 0.25.

Within the treated jurisdiction, independent repeated cross-sections are drawn before and after treatment from a population with group shares $(0.35,0.30,0.35)$. Individual untreated outcomes follow
\begin{equation}
Y_{it}(0)
=
A_{1t}(0)+\gamma_{Z_i}+\varepsilon_{it},
\label{eq:oa-micro-dgp}
\end{equation}
where $(\gamma_1,\gamma_2,\gamma_3)=(-0.4,0,0.4)$ and $\varepsilon_{it}\sim N(0,1)$. Because the population-weighted mean of the group shifts is zero, the population mean of the individual-level outcome equals the treated jurisdiction's aggregate outcome. Treatment effects are $(0.35,0.25,0.15)$ across the three groups, with population average 0.25.

I combine $T_0\in\{10,30\}$ pretreatment periods with $n\in\{500,2500\}$ observations per repeated cross-section. I estimate the population-average effect using the generalized synthetic control method \citep{xu2017generalized}, with the factor rank fixed at two. The target group shares are estimated from the pooled pre- and post-treatment observations in each replication. Bias and root mean squared error (RMSE) are based on 1,000 Monte Carlo replications, and coverage is evaluated in 500. Within each coverage replication, I estimate uncertainty in the population-average effect using 299 empirical-residual parametric bootstrap draws under the fitted model and uncertainty in the within-jurisdiction component using 299 respondent resamples drawn separately within each wave. I add the component variance estimates and construct nominal 95\% Wald intervals using the standard-normal critical value. Table~\ref{tab:oa-finite-sample-subgroups} reports the subgroup-specific results.

\begin{table}[t]
\centering
\caption{Subgroup-specific finite-sample performance}
\label{tab:oa-finite-sample-subgroups}
\begin{threeparttable}
\small
\begin{tabular}{S[table-format=2.0] S[table-format=4.0] l S[table-format=-1.3] S[table-format=1.3] S[table-format=1.3]}
\toprule
{$T_0$} & {$n$ per wave} & {Group} & {Bias} & {RMSE} & {95\% coverage} \\
\midrule
10 & 500 & $Z=1$ & -0.008 & 0.278 & 0.964 \\
10 & 500 & $Z=2$ & 0.000 & 0.276 & 0.940 \\
10 & 500 & $Z=3$ & -0.005 & 0.270 & 0.952 \\
10 & 2500 & $Z=1$ & 0.002 & 0.272 & 0.956 \\
10 & 2500 & $Z=2$ & -0.000 & 0.270 & 0.962 \\
10 & 2500 & $Z=3$ & 0.000 & 0.270 & 0.964 \\
30 & 500 & $Z=1$ & -0.005 & 0.233 & 0.958 \\
30 & 500 & $Z=2$ & 0.003 & 0.241 & 0.954 \\
30 & 500 & $Z=3$ & -0.003 & 0.238 & 0.968 \\
30 & 2500 & $Z=1$ & -0.008 & 0.224 & 0.946 \\
30 & 2500 & $Z=2$ & -0.011 & 0.222 & 0.956 \\
30 & 2500 & $Z=3$ & -0.010 & 0.224 & 0.958 \\
\bottomrule
\end{tabular}
\begin{tablenotes}[flushleft]
\footnotesize
\item \textit{Notes:} Bias and RMSE are based on 1,000 Monte Carlo replications. Coverage refers to nominal 95\% Wald intervals and is based on 500 replications.
\end{tablenotes}
\end{threeparttable}
\end{table}

\subsection{Covariate standardization}

The second experiment retains the aggregate process, group shares, baseline group shifts, and individual disturbance from the first design, but adds a continuous covariate whose distribution differs across groups:
\begin{equation}
X\mid Z=g
\sim
N(\mu_g,1),
\qquad
(\mu_1,\mu_2,\mu_3)=(-0.6,0,0.6).
\label{eq:oa-x-dgp}
\end{equation}
Untreated individual outcomes are generated as
\begin{equation}
Y_{it}(0)
=
A_{1t}(0)
+
\gamma_{Z_i}
+
0.4X_i
+
0.30X_i\ind\{t=1\}
+
\varepsilon_{it},
\label{eq:oa-standardization-dgp}
\end{equation}
and the post-treatment effect is
\begin{equation}
\tau(z,x)=\tau_z+0.15x,
\qquad
(\tau_1,\tau_2,\tau_3)=(0.35,0.25,0.15).
\label{eq:oa-standardization-effect}
\end{equation}
The marginal target distribution of $X$ has mean zero, so the standardized effects are 0.35, 0.25, and 0.15.

The observed pre--post change contains $0.30X$ from the differential untreated trend and $0.15X$ from variation in treatment effects. Since the marginal target distribution of $X$ has mean zero, the unadjusted framework therefore yields
\[
\tau_g+(0.30+0.15)\mu_g
=
\tau_g+0.45\mu_g,
\]
or 0.08, 0.25, and 0.42 across the three groups.

I use $T_0=20$ aggregate pretreatment periods and 2,000 observations in each repeated cross-section. I fit a conditional mean model with group-specific levels and changes, allowing both to vary linearly with $X$. For each group, I average the fitted change over the same empirical marginal distribution of $X$ and subtract the fitted change averaged over the empirical joint distribution of $(Z,X)$. The experiment uses 1,000 Monte Carlo replications; Table~\ref{tab:oa-standardization} reports the resulting subgroup-specific estimates.

\begin{table}[t]
\centering
\caption{Covariate-standardization experiment}
\label{tab:oa-standardization}
\begin{threeparttable}
\small
\setlength{\tabcolsep}{2pt}
\begin{tabular}{@{}l S[table-format=1.3] S[table-format=1.3] S[table-format=1.3] S[table-format=-1.3] S[table-format=1.3]@{}}
\toprule
Group & {\shortstack{True\\standardized\\effect}} & \multicolumn{2}{c}{Monte Carlo mean} & \multicolumn{2}{c}{Standardized estimator} \\
\cmidrule(lr){3-4}\cmidrule(l){5-6}
 & & {Unadjusted} & {Standardized} & {Bias} & {RMSE} \\
\midrule
$Z=1$ & 0.350 & 0.090 & 0.361 & 0.011 & 0.237 \\
$Z=2$ & 0.250 & 0.259 & 0.259 & 0.009 & 0.233 \\
$Z=3$ & 0.150 & 0.429 & 0.160 & 0.010 & 0.236 \\
\bottomrule
\end{tabular}
\begin{tablenotes}[flushleft]
\footnotesize
\item \textit{Notes:} Results are based on 1,000 Monte Carlo replications. The unadjusted estimator centers group-specific observed changes without standardizing the distribution of $X$. The standardized estimator averages the conditional change for each group over the common target distribution of $X$ before combining it with the population-average effect.
\end{tablenotes}
\end{threeparttable}
\end{table}

\section{Massachusetts application: data and estimation}
\label{sec:mass-data}

The application combines the Massachusetts Health Reform Survey (MHRS) and the Behavioral Risk Factor Surveillance System (BRFSS). I use the 2006 MHRS \citep{long2006mhrs} and the 2007 MHRS \citep{long2007mhrs} repeated cross-sections to estimate education-specific standardized treatment-effect contrasts within Massachusetts. Quarterly BRFSS data from 2000Q1 through 2010Q4 provide the state panel used to estimate the population-average effect \citep{cdc2013brfss}. Both analyses are restricted to adults aged 18--64. The resulting analytic MHRS samples contain \MhrsCountPre\ respondents in 2006 and \MhrsCountPost\ in 2007; the state panel contains Massachusetts and \ComparisonStateCount\ comparison states.

I code current insurance coverage as a binary outcome in both sources. In the MHRS, I reconstruct current coverage from the insurance-status sequence, including respondents whose coverage status is established before later items are skipped. The resulting survey-weighted coverage rates are 86.05 percent in 2006 and 93.51 percent in 2007. For BRFSS, I harmonize the current-health-plan item across panel years.

Massachusetts enacted its major health-insurance reform in 2006 and implemented it in stages through 2007. Some provisions were already in effect when the 2006 MHRS was fielded, so the 2006 survey represents a partial-implementation baseline. The main analysis dates the subsequent policy transition to 2007Q1 and evaluates the population-average effect from the comparative-case analysis in 2007Q4, the quarter most closely aligned with the 2007 MHRS wave. Under this treatment definition, provisions already operating before 2007Q1 form part of the baseline regime, and the aggregate counterfactual represents coverage under continuation of that regime after 2007Q1.

The heterogeneity analysis distinguishes adults with high school education or less, some college or an associate degree, and a bachelor's degree or higher. The target population is the population of Massachusetts adults aged 18--64 represented by the 2007 MHRS survey weights. Let $F_{X,Z}^\star$ denote its joint distribution of education $Z$ and adjustment variables $X$, where $X$ consists of age and sex. The standardized education-specific effects use the marginal age-sex distribution $F_X^\star$ for every education group,\footnote{The standardization uses the full 2007 MHRS age-sex distribution. Where this distribution extends beyond the empirical age support of an education group, the application relies on extrapolation from the fitted conditional mean model; the nonparametric overlap condition in Proposition 2.2 of the main text would instead require restricting the target distribution to common support.} while the population-average observed change is evaluated over the joint distribution $F_{X,Z}^\star$.

I estimate conditional insurance coverage using a survey-weighted linear working model with education- and period-specific intercepts, age profiles, and sex coefficients. Let $P_i$ denote the post-treatment indicator, $B(A_i)$ a natural cubic-spline basis in age with three degrees of freedom, and $S_i$ an indicator for female sex. Using $Z_i$ for education group, the conditional mean model is
\begin{equation}
\E[Y_i\mid Z_i=z,P_i=p,A_i=a,S_i=s]
=
\alpha_{zp}
+
B(a)^{\mathsf T}\beta_{zp}
+
\gamma_{zp}s.
\label{eq:oa-micro-model}
\end{equation}
This parameterization allows the pre--post change to vary jointly with education, age, and sex.

Let $\widehat\mu_p(z,x)$ denote the fitted conditional mean in period $p$. The standardized observed change for education group $z$ is
\begin{equation}
\widehat\Delta^S(z)
=
\int
\left\{
\widehat\mu_1(z,x)-\widehat\mu_0(z,x)
\right\}
\,d\widehat F_X^\star(x),
\label{eq:oa-standardized-change}
\end{equation}
whereas the target-population observed change is
\begin{equation}
\widehat{\bar\Delta}
=
\int
\left\{
\widehat\mu_1(z,x)-\widehat\mu_0(z,x)
\right\}
\,d\widehat F_{X,Z}^\star(x,z).
\label{eq:oa-target-change}
\end{equation}
Their difference estimates the standardized treatment-effect contrast between group $z$ and the population-average effect. Because the first quantity evaluates every education group at the same marginal age-sex distribution while the second averages over the observed joint distribution, these standardized treatment-effect contrasts need not average to zero across education groups.

Massachusetts is the treated state. Maine and Vermont are excluded because both introduced statewide coverage reforms during the 2000--2010 analysis period \citep{maineDirigo2003,vermontCatamount2006}; the District of Columbia and U.S. territories are outside the state-level donor pool. I construct quarterly state-level coverage rates using the BRFSS survey weights and estimate the Massachusetts counterfactual using the generalized synthetic control method \citep{xu2017generalized}, implemented in \texttt{fect} \citep{liu2024practical}. I compare ranks zero through five by rolling cross-validation over untreated observations and select the rank with the lowest prediction error. Rank zero minimizes the criterion, so the fitted model contains additive state and period effects but no interactive factors.

I use 1,999 bootstrap draws for each component. For the population-average effect, I use the parametric bootstrap under the fitted comparative-case model and center the resulting errors before adding them to the point estimate. For the within-jurisdiction component, primary sampling units are resampled with replacement within wave-specific survey strata, with singleton strata retained. The supplied survey weights are held fixed at their observed values and multiplied by the resampled primary-sampling-unit frequencies within each resample. Each resample refits the conditional mean model and re-estimates the 2007 target distributions and standardized treatment-effect contrasts. The aggregate BRFSS panel and MHRS are based on separate survey samples, so I combine independently drawn bootstrap errors from the two components.

To construct the direct comparison shown in Figure~2(c), I use BRFSS microdata to form separate education-specific state-quarter panels standardized to the same 2007 MHRS age-sex target distribution and estimate each effect using the same treatment timing and comparative-case estimator. Their confidence intervals condition on the estimated 2007 MHRS target distribution.

\end{document}